\documentclass[12pt]{article}

\usepackage{graphicx}%
\usepackage{multirow}%
\usepackage{amsmath,amssymb,amsfonts}%
\usepackage{amsthm}% 
\usepackage{bm}%
\usepackage{mathrsfs}%
\usepackage{xcolor}%
\usepackage{textcomp}%
\usepackage{manyfoot}%
\usepackage{booktabs}%
\usepackage{algorithm}%
\usepackage{algorithmicx}%
\usepackage{algpseudocode}%
\usepackage{listings}%

\newtheorem{theorem}{Theorem}
\newtheorem{definition}{Definition}%

\begin{document}
\title{Bregman Consensus}

\author{Andrei N. Soklakov\footnote{
%Department of Computing, Security and Mathematics, Royal Holloway, University of London,
%Egham, Surrey TW20 0EX, UK. Email: 
Andrei.Soklakov@gmail.com}}

\date{}

\maketitle

\begin{center}
\parbox{14cm}{
{\small
Consider a community of agents who are seeking consensus on a set of parameters. The agents agree to use the same Bregman-type divergence to quantify disagreement between their individual estimates of the parameters but have varying confidence in each other's abilities. Each agent is happy to revise their estimate by moving to the weighted barycenter of all individual estimates with higher weights applied to more trusted agents. We show that such revisions naturally lead to an iterative algorithm which converges to a unique consensus estimate of the parameters. Furthermore, since the consensus estimate is itself a barycenter with computable weights, the group emerges as a collective super-agent with a well-formed opinion regarding the ability of each individual agent.

%{\bf Keywords:} Multi-agent consensus, Opinion dynamics, Bregman barycenter, Non-scalar means

%{\bf JEL Classification:} C02, C61, C62, D83, D85

%{\bf MSC Classification:} 62B11, 93D50, 26E60, 90C25, 94A17

}}
\end{center}

\newpage

\section{Introduction}

Information aggregation from multiple sources is an incredibly rich field with many practical applications. The mathematical roots of the problem lie in two distinct traditions. In the statistical literature, DeGroot~\cite{DeGroot_1974} considered a group of agents who repeatedly update their estimates as a weighted arithmetic average of the group's previous estimates to achieve consensus. This linear consensus framework has since become a standard tool in distributed averaging and opinion dynamics. In parallel, the theory of quasi-arithmetic means, developed independently by Kolmogorov~\cite{Kolmogorov_1930}, Nagumo~\cite{Nagumo_1930}, and de Finetti~\cite{deFinetti_1931}, characterized the class of scalar means that can be written as $M_f(x_1,\dots,x_N)=f^{-1}(\sum_i w_i f(x_i))$ for a continuous strictly monotone function $f$. Matkowski~\cite{Matkowski_1999} extended this to iterations of mean-type mappings, proving convergence to invariant scalar means. 

The present paper sits at the intersection of these traditions. It replaces the arithmetic average in DeGroot's iteration with a Bregman barycenter, so that agents with varying trust in one another's abilities converge to a consensus that is itself a weighted Bregman barycenter of the initial estimates. In doing so, it provides a geometric divergence-based nonlinear generalisation of the linear consensus model. At the same time, it provides a partial generalisation of Matkowski's convergence result to the case of non-scalar means~\cite{Nielsen_2023} by providing a concrete analytical example.

We use the rest of this introduction to fix basic notation by recalling the classic notion of Bregman barycenters. Section~\ref{Sec:center-seeking_agents} introduces center-seeking agents and provides the main theorem regarding the collective evolution of such agents towards a unique consensus. In Section~\ref{Sec:GroupAgent} we look at the emergent group behaviour as a whole. In Section~\ref{Sec:ConvergenceSpeed} we estimate the speed of convergence. Finally, we discuss the connection to the theory of generalised means in Section~\ref{Sec:NonScalarMeans}.

Let $\bm{x}$ be a $d$-dimensional vector of parameters and let $\mathbb{X}\subset\mathbb{R}^d$ be an open convex set containing all possible values of $\bm{x}$. The Bregman divergence~\cite{Bregman_1967} between two vectors $\bm{x}$ and $\bm{x}'$ is defined as
\begin{equation}\label{Eq:BregmanDiv}
D_F(\bm{x},\bm{x}')=F(\bm{x})-F(\bm{x}')-\langle \bm{x}-\bm{x}',\nabla F(\bm{x}')\rangle\,,
\end{equation} 
where $F:\mathbb{X}\to\mathbb{R}$ is a strictly convex differentiable function. We furthermore require that the gradient map $\nabla F: \mathbb{X}\to\mathbb{Y}$ is a homeomorphism onto an open convex set $\mathbb{Y}\subset\mathbb{R}^d$. 

Given a list of $N$ vectors $\bm{x}_{1:N}=(\bm{x}_1, \bm{x}_2, \dots, \bm{x}_N)$ and a corresponding list of positive weights $w_{1:N}=(w_1,w_2,\dots,w_N)$ such that $\sum_{i=1}^N w_i=1$, the (left) Bregman barycenter, $\bar{\bm{x}}$, is defined as
\begin{equation}\label{Eq:BregmanBC}
\bar{\bm{x}}(\bm{x}_{1:N},w_{1:N})=\arg\min_{\bm{x}}\sum_{i=1}^N w_iD_F(\bm{x},\bm{x}_i)\,.
\end{equation}
This optimisation problem is easy to solve analytically with the result
\begin{equation}\label{Eq:BregmanAnalytic}
\bar{\bm{x}}(\bm{x}_{1:N},w_{1:N})=\nabla F^{-1}\Big(\sum_{i=1}^N w_i \nabla F(\bm{x}_i) \Big)\,,
\end{equation}
where $\nabla F^{-1}$ is the inverse of the gradient map $\nabla F$. 

\section{Center-seeking agents}\label{Sec:center-seeking_agents}

\begin{definition}[centripetal agent]
Consider a group of $N$ agents with individual views given by the list $\bm{x}_{1:N}$ of vectors from $\mathbb{X}$. We shall call an agent centripetal for the group if she is always willing to change her view to the barycenter $\bar{\bm{x}}(\bm{x}_{1:N},w_{1:N})$. The weights $w_{1:N}$ allow the centripetal agent to express the varying confidence she may have in the abilities of each individual member of the group.
\end{definition}

\begin{theorem}
Consider a group of $N$ centripetal agents with views $\bm{x}_{1:N}$. The agents' weights can be arranged into a square row-stochastic matrix  
\begin{equation}\label{Eq:W}
W=(w_{ij}),\ w_{ij}>0,\ \sum_{j=1}^N w_{ij}=1\ for\ all\ i.
\end{equation}
The group engages in an iterative procedure in which all views are published, then each agent updates their individual view in light of the published information, and the process repeats. The group iteratively converges to a unique consensus view $\bar{\bm{x}}$ which is given by
\begin{equation} \label{Eq:ConsensusX}
\bar{\bm{x}}=\nabla F^{-1}\Big(\sum_{i=1}^N \pi_i \nabla F(\bm{x}_i) \Big)\,,
\end{equation}
where $\pi = (\pi_1,\dots,\pi_N)$ is the stationary distribution of $W$, that is
\begin{equation} \label{Eq:ConsensusPi}
\pi W = \pi, \qquad \sum_{j=1}^N \pi_j = 1, \qquad \pi_j > 0.
\end{equation}

\end{theorem}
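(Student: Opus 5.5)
The plan is to linearise the dynamics by passing to dual coordinates. Write $\bm{y}_i=\nabla F(\bm{x}_i)\in\mathbb{Y}$ and let $\bm{x}_i^{(t)}$ be the view of agent $i$ after $t$ rounds, with $\bm{x}_i^{(0)}=\bm{x}_i$. By the analytic form (\ref{Eq:BregmanAnalytic}) of the barycenter, agent $i$'s update is
\begin{equation*}
\bm{x}_i^{(t+1)}=\nabla F^{-1}\Big(\sum_{j=1}^N w_{ij}\nabla F(\bm{x}_j^{(t)})\Big)\,,
\end{equation*}
so that $\bm{y}_i^{(t+1)}=\sum_j w_{ij}\bm{y}_j^{(t)}$. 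This is well defined at every step: $\mathbb{Y}$ is convex, so the convex combination stays in $\mathbb{Y}$, and $\nabla F^{-1}$ is defined on all of $\mathbb{Y}$. Stacking the dual vectors as the rows of an $N\times d$ matrix $Y^{(t)}$, we get $Y^{(t)}=W^tY^{(0)}$. The nonlinear Bregman iteration is therefore exactly DeGroot's linear iteration, run coordinatewise in the dual space.

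Next comes the Markov-chain/Perron--Frobenius step. Since $W$ is row-stochastic with all entries strictly positive, it is primitive. Perron--Frobenius then gives the following:
\begin{itemize}
\item $1$ is a simple eigenvalue of $W$;
\item every other eigenvalue has modulus strictly less than $1$;
\item the left eigenvector for eigenvalue $1$ can be normalised to a probability vector $\pi$ with $\pi_j>0$, and it is unique.
\end{itemize}
Consequently $W^t\to \mathbf{1}\pi$ as $t\to\infty$, where $\mathbf{1}$ is the column of ones. If I want to avoid spectral theory, there is a self-contained alternative: a Dobrushin contraction argument. With $\delta=\min_{ij}w_{ij}>0$, each multiplication by $W$ shrinks the spread $\max_i y_i^{(k)}-\min_i y_i^{(k)}$ of any scalar coordinate by a factor at most $1-N\delta<1$. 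Hence all rows of $Y^{(t)}$ converge to a common limit.

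To identify the limit, note that $\pi Y^{(t+1)}=\pi W Y^{(t)}=\pi Y^{(t)}$, so $\pi Y^{(t)}$ is conserved. It follows that the common limit of the rows is $\bm{y}^*=\sum_i \pi_i\bm{y}_i$. This point lies in $\mathbb{Y}$ as a convex combination of points of $\mathbb{Y}$. Finally, $\nabla F^{-1}$ is continuous on $\mathbb{Y}$ because $\nabla F$ is a homeomorphism, so $\bm{x}_i^{(t)}=\nabla F^{-1}(\bm{y}_i^{(t)})\to\nabla F^{-1}(\bm{y}^*)$ for every $i$. This is exactly (\ref{Eq:ConsensusX}). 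Uniqueness of the consensus follows from uniqueness of $\pi$ in (\ref{Eq:ConsensusPi}) and injectivity of $\nabla F$.

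I expect the main obstacle to be the careful justification of convergence of $W^t$ together with existence, uniqueness and positivity of $\pi$. I would handle it by citing Perron--Frobenius for positive matrices, with the contraction argument as a backup that also yields a geometric rate. The remaining subtleties are the domain issues: the iterates must stay in $\mathbb{Y}$, which convexity secures, and the limit must be pulled back through $\nabla F^{-1}$, which the homeomorphism assumption secures.
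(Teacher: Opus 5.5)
Your proposal is correct and follows the paper's proof essentially step for step. You pass to dual coordinates $\bm{y}=\nabla F(\bm{x})$ so that the iteration becomes $Y^{(k)}=W^kY^{(0)}$, invoke Perron--Frobenius for the primitive matrix $W$ to get $W^k\to\mathbf{1}\pi$, and pull the limit back through the continuous map $\nabla F^{-1}$. Your extras are sound refinements: convexity of $\mathbb{Y}$ keeping every iterate in the domain of $\nabla F^{-1}$, the conserved quantity $\pi Y^{(t)}$, and the Dobrushin contraction backup. The paper leaves the domain issue implicit and relies on Perron--Frobenius alone.
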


\begin{proof}
Let $\bm{x}_{1:N}^{(k)}$ be the list of views for all $N$ agents after the $k$-th iteration starting from $\bm{x}_{1:N}^{(0)}=\bm{x}_{1:N}$. The $i$-th agent computes
\begin{equation}\label{Eq:IterativeProcess}
\bm{x}_i^{(k+1)}= \arg\min_{\bm{x}}\sum_{j=1}^N w_{ij}D_F(\bm{x},\bm{x}_j^{(k)})\,.
\end{equation}
The corresponding analytic solution
\begin{equation}\label{Eq:AnalyticOptimization}
\bm{x}_i^{(k+1)} = \nabla F^{-1}\Big(\sum_{j=1}^N w_{ij} \nabla F(\bm{x}_j^{(k)}) \Big)\,.
\end{equation} 
By applying $\nabla F$ to both sides of this equation and introducing 
\begin{equation}\label{Eq:y}
\bm{y}_i^{(k)}=\nabla F(\bm{x}_i^{(k)})
\end{equation} we obtain 
\begin{equation}
\bm{y}_i^{(k+1)} = \sum_{j=1}^N w_{ij} \bm{y}_j^{(k)}\,,
\end{equation}
and therefore
\begin{equation}\label{Eq:yk}
\bm{y}_i^{(k)} = \sum_{j=1}^N (W^k)_{ij} \bm{y}_j^{(0)}\,.
\end{equation}
Since $W$ is row-stochastic and positive, it is primitive. By the Perron-Frobenius theorem~\cite{HornJohnson_1985}, $W$ has a simple eigenvalue 1 and all other eigenvalues have modulus strictly smaller than 1. Moreover, the powers of $W$ converge
\begin{equation}\label{Eq:limWk}
\lim_{k\to\infty}(W^k)_{ij}=\pi_j\,,
\end{equation} 
where $\pi=(\pi_1, \pi_2, \dots, \pi_N)$ is the unique stationary row vector satisfying
\begin{equation}
\pi W = \pi, \qquad \sum_{j=1}^N \pi_j = 1, \qquad \pi_j > 0.
\end{equation}
Since $\nabla F^{-1}$ is continuous, Eqs.~(\ref{Eq:y}), (\ref{Eq:yk}) and (\ref{Eq:limWk}) imply
\begin{equation}\label{Eq:FinalLimit}
\lim_{k\to\infty}\bm{x}_i^{(k)}=\nabla F^{-1}\Big(\sum_{j=1}^N\pi_j\nabla F(\bm{x}^{(0)}_j)\Big)\,.
\end{equation} 
Note that the right-hand side of this equation does not depend on $i$ and so all agents converge to the same consensus view~(\ref{Eq:ConsensusX}).
\end{proof}
The above proof only requires $W$ to be primitive. The positivity requirement, $w_{ij}>0$, in the statement of the theorem is unnecessarily restrictive. The purpose of that is entirely pedagogical, to make the theorem more intuitive in the first reading.

\section{Emergence of the group agent}\label{Sec:GroupAgent}
We see that the emergent net behaviour of the group converges to that of a single centripetal agent who compiles her view $\bar{\bm{x}}$ as the barycenter
\begin{equation}
\bar{\bm{x}}=\arg\min_{\bm{x}}\sum_{i=1}^N \pi_iD_F(\bm{x},\bm{x}_i)\,.
\end{equation}
While this collective agent does not belong to the original group, she behaves as if she had an opinion regarding the relative abilities of each individual agent. This collective opinion is represented by the stationary vector of weights $\pi$. This happens even when the individual agents are operating completely autonomously and keep their confidence weights private.

\section{Performance}\label{Sec:ConvergenceSpeed}

In the case where the number of agents is small, a hypothetical central planner who knows the individual views and confidence weights would be able to compute the emergent group consensus using Eqs.~(\ref{Eq:ConsensusX})~and~(\ref{Eq:ConsensusPi}). The complexity of doing so is dominated by finding the left eigenvector of $W$ which takes $O(N^3)$ operations in general.

When the agents do not share their confidence weights, or when the number of agents is too large to be processed as a group, we can compute the group's consensus by implementing the iterative procedure~(\ref{Eq:IterativeProcess}). In the rest of this section we estimate the complexity of this approach. 
 
Before we calculate the complexity of the iterative algorithm, we should note that 
the Perron-Frobenius convergence mechanism~(\ref{Eq:limWk}) is used in a great variety of applications. Besides the already mentioned linear DeGroot's model~\cite{DeGroot_1974},
the same mechanism underlies the PageRank algorithm for web ranking~\cite{BrinPage_1998},
the Friedkin-Johnsen model of opinion dynamics with stubborn agents~\cite{FriedkinJohnsen_1990}, 
average-consensus algorithms in distributed multi-agent control~\cite{Olfati-Saber_EtAl_2007}, 
and Markov chain Monte Carlo methods~\cite{GilksEtAl_1996}, among many others. 
In each case, the asymptotic rate of convergence is derived from that of~(\ref{Eq:limWk}), which depends on the second-largest eigenvalue of the iteration matrix $W$~\cite{HornJohnson_1985}.

Because our framework includes the classical linear consensus model as a special case\footnote{Consider $F(\bm{x})=\lVert\bm{x}\rVert^2_2$.}, we start the convergence analysis in the same vein until we encounter and handle the nonlinearity. By Theorem 8.5.1 from~\cite{HornJohnson_1985}, for any $W$ as defined by Eq.~(\ref{Eq:W}) and for any sufficiently small $\epsilon>0$ there exists a constant $C=C(W,\epsilon)$ such that
\begin{equation} \label{HornJohnson_8.5.1}
\lVert W^k-P\rVert_\infty \leq C(|\lambda_2|+\epsilon)^k\ \ {\rm for\ all\ }k=1,2,\dots,
\end{equation}
where $P_{ij}=\pi_j$ and $|\lambda_2|<1$ is the modulus of the second-largest eigenvalue of $W$. 

Let us assemble the $N$ iterates $\bm{y}_i^{(k)}$ defined by Eq.~(\ref{Eq:y}) into a matrix $Y_k\in\mathbb{R}^{N\times d}$. In this notation, Eq.~(\ref{Eq:yk}) becomes $Y_k=W^kY_0$, and we compute
\begin{align}
\lVert Y_k-Y_\infty\rVert_\infty&=\lVert (W^k-P)Y_0\rVert_\infty
                                \leq\lVert W^k-P\rVert_\infty \lVert Y_0\rVert_\infty\cr
                                &\leq C(|\lambda_2|+\epsilon)^k \lVert Y_0\rVert_\infty\ \ {\rm for\ all\ }k=1,2,\dots. \label{Eq:YkInfinity}
\end{align}
For simplicity, let us focus on dynamics away from the boundary $\partial\mathbb{X}$, i.e., we assume that the agents' initial estimates $\bm{x}_{1:N}$, the subsequent iterations $\bm{x}^{(k)}_{1:N}$ and the consensus $\bar{\bm{x}}$ lie in a compact convex $\mathbb{K}\subset\mathbb{X}$. Furthermore, let us assume that the Hessian $\nabla^2 F$ is continuous on $\mathbb{K}$.

Let $m_\mathbb{K}$ be the smallest eigenvalue of $\nabla^2 F$ on $\mathbb{K}$
\begin{equation}
m_\mathbb{K}=\min_{\bm{x}\in\mathbb{K}}\lambda_{\min}\Big(\nabla^2 F(\bm{x})\Big)\,.
\end{equation}
Because $F$ is strictly convex, $\nabla^2 F(\bm{x})$ is positive definite for every $\bm{x}\in\mathbb{K}$. Furthermore, since $\mathbb{K}$ is compact and $\nabla^2 F$ is continuous on $\mathbb{K}$, the minimum eigenvalue is attained and is strictly positive
$m_\mathbb{K}>0$.

Consider a pair of vectors $\bm{x},\bm{x}'\in \mathbb{K}$, $\bm{x}\neq\bm{x}'$. Since $\mathbb{K}$ is convex, any such pair defines a linear segment $[\bm{x},\bm{x}']=\{\bm{x}+t(\bm{x}'-\bm{x})|t\in(0,1)\}\subset\mathbb{K}$. By the mean value theorem one can find a point $\bm{c}\in [\bm{x},\bm{x}']$ such that
\begin{align}
\langle \bm{x}-\bm{x}',\nabla F(\bm{x})-\nabla F(\bm{x}')\rangle & = \langle \bm{x}-\bm{x}',\nabla^2 F(\bm{c})(\bm{x}-\bm{x}')\rangle\cr
                                                & \geq m_\mathbb{K}\lVert \bm{x}-\bm{x}' \rVert_2^2\,.
\end{align}
At the same time, by Cauchy-Schwarz inequality,
\begin{equation}
\langle \bm{x}-\bm{x}',\nabla F(\bm{x})-\nabla F(\bm{x}')\rangle \leq \lVert \bm{x}-\bm{x}' \rVert_2\cdot \lVert \nabla F(\bm{x})-\nabla F(\bm{x}') \rVert_2\,,
\end{equation}
and therefore
\begin{equation}
m_\mathbb{K}\lVert \bm{x}-\bm{x}' \rVert_2\leq \lVert \nabla F(\bm{x})-\nabla F(\bm{x}') \rVert_2\,.
\end{equation}
Applying this to the iterates~(\ref{Eq:y})
\begin{equation}
\lVert \bm{x}_i^{k}-\bar{\bm{x}} \rVert_2\leq \lVert \bm{y}_i^{k}-\bar{\bm{y}} \rVert_2/ m_\mathbb{K}\,,
\end{equation}
where $\bar{\bm{y}}=\nabla F(\bar{\bm{x}})$. We can now combine this with Eq.~(\ref{Eq:YkInfinity}). Because the $L_2$ norm of a single row is always smaller or equal to the infinity norm of an entire matrix, and noticing that $\bar{\bm{y}}=\sum_{j=1}^N\pi_j\bm{y}_j^{(0)}$,
\begin{equation}
\lVert \bm{x}_i^{k}-\bar{\bm{x}} \rVert_2\leq C'(|\lambda_2|+\epsilon)^k\ \ {\rm for\ all\ }k=1,2,\dots,
\end{equation}
where $C'$ is a constant. We see a classic convergence pattern, where each agent rapidly converges to the consensus with the convergence rate determined by the magnitude of the second-largest eigenvalue of $W$. 

Since evaluating $\nabla F$ does not depend on $N$, a single iteration~(\ref{Eq:AnalyticOptimization}) would cost each agent $O(N)$, and the overall effort for each agent is $O(N\ln\epsilon_2/\ln|\lambda_2|)$ where $\epsilon_2$ is the maximum acceptable error in the $L_2$ norm.

\section{Non-scalar means}\label{Sec:NonScalarMeans}
Let $\mathbb{I}\subseteq \mathbb{R}$ be an interval. A scalar mean is defined as a function $M: \mathbb{I}^N\to \mathbb{I}$ such that for all $x_{1:N}\in \mathbb{I}^N$
\begin{equation}
\min(x_1,\dots, x_N)\leq M(x_{1:N}) \leq \max(x_1,\dots, x_N)\,.
\end{equation}
$N$ such means, $M_1,\dots,M_N$, define an iterative mean-type mapping $M_{i,k}$ on the cube $\mathbb{I}^N$ as follows. For the first iteration
\begin{equation}\label{Eq:MTM1}
M_{i,1}=M_i,\ \ \ i=1,2,\dots,N\,,
\end{equation}
and for all subsequent iterations
\begin{equation}\label{Eq:MTM2}
M_{i,k+1}=M_i(M_{1,k}(x_{1:N}),\dots, M_{N,k}(x_{1:N})).
\end{equation}
Matkowski~\cite{Matkowski_1999} proved that under some natural conditions the iterations converge 
\begin{equation}\label{Eq:MatkowskiLimit}
\lim_{k\to\infty}M_{i,k}(x_{1:N})=A(x_{1:N})\,,
\end{equation} 
where the mean $A$ is often called invariant under the mapping~(\ref{Eq:MTM1},\ref{Eq:MTM2}) as it stops responding to subsequent iterations.

The structure of Eq.~(\ref{Eq:BregmanAnalytic}) bears a striking resemblance to that of the scalar quasi-arithmetic means~\cite{Kolmogorov_1930,Nagumo_1930,deFinetti_1931} except, of course, $\bm{x}_{1:N}$ are not scalars. This observation motivated Nielsen~\cite{Nielsen_2023} to extend the definition of the quasi-arithmetic mean to non-scalar types, such as
\begin{equation}\label{Eq:NonScalarMean}
M_{\nabla F}(\bm{x}_{1:N}, w_{1:N})=\nabla F^{-1}\Big(\sum_{i=1}^N w_i \nabla F(\bm{x}_i) \Big)\,.
\end{equation}

We recognize our result as a constructive generalization of Matkowski's convergence theorems to the non-scalar mean~(\ref{Eq:NonScalarMean}) with Eq.~(\ref{Eq:FinalLimit}) being a direct analogue of Eq.~(\ref{Eq:MatkowskiLimit}).

\end{document}